\newcommand{\R}{{\mathbb  R}}
\newcommand{\Z}{{\mathbb  Z}}
\newcommand{\N}{{\mathbb  N}}
\newcommand{\C}{{\mathbb  C}}
\newcommand{\fdot}{\,\cdot\,}
\def\Ddots{\mathinner{\mkern1mu\raise\p@
\vbox{\kern7\p@\hbox{.}}\mkern2mu
\raise4\p@\hbox{.}\mkern2mu\raise7\p@\hbox{.}\mkern1mu}}
\newcommand{\cH}{\mathcal{H}}
\newcommand{\f}{\varphi}
\newcommand{\p}{\mathbb{P}}
\DeclareMathOperator{\spa}{span}
\DeclareMathOperator{\clos}{clos}
\DeclareMathOperator{\dist}{dist}
\newcommand{\ci}[1]{_{ {}_{\scriptstyle #1}}}
\newcommand{\ti}[1]{_{\scriptstyle \text{\rm #1}}}
\chardef\mathlig@atcode\count255
\def\actively#1#2{\begingroup\uccode`\~=`#2\relax\uppercase{\endgroup#1~}}
\def\mathlig@gobble{\afterassignment\mathlig@next@cmd\let\mathlig@next= }
\def\mathlig@delim{\mathlig@delim}
\def\mathlig@defcs#1{\expandafter\def\csname#1\endcsname}
\def\mathlig@let@cs#1#2{\expandafter\let\expandafter#1\csname#2\endcsname}
\def\mathlig@appendcs#1#2{\expandafter\edef\csname#1\endcsname{\csname#1\endcsname#2}}
\def\mathlig#1#2{\mathlig@checklig#1\mathlig@end\mathlig@defcs{mathlig@back@#1}{#2}\ignorespaces}
\def\mathlig@checklig#1#2\mathlig@end{%
 \expandafter\ifx\csname mathlig@forw@#1\endcsname\relax
 \expandafter\mathchardef\csname mathlig@back@#1\endcsname=\mathcode`#1%
 \mathcode`#1"8000\actively\def#1{\csname mathlig@look@#1\endcsname}%
 \mathlig@dolig#1\mathlig@delim
\fi
\mathlig@checksuffix#1#2\mathlig@end
}
\def\mathlig@checksuffix#1#2\mathlig@end{%
\ifx\mathlig@delim#2\mathlig@delim\relax\else\mathlig@checksuffix@{#1}#2\mathlig@end\fi
}
\def\mathlig@checksuffix@#1#2#3\mathlig@end{%
\expandafter\ifx\csname mathlig@forw@#1#2\endcsname\relax\mathlig@dosuffix{#1}{#2}\fi
\mathlig@checksuffix{#1#2}#3\mathlig@end
}
\def\mathlig@dosuffix#1#2{%
\mathlig@appendcs{mathlig@toks@#1}{#2}%
\mathlig@dolig{#1}{#2}\mathlig@delim
}
\def\mathlig@dolig#1#2\mathlig@delim{%
%The look macro just \futurelets what's coming up and
%then passes control to forw
 \mathlig@defcs{mathlig@look@#1#2}{%
 \mathlig@let@cs\mathlig@next{mathlig@forw@#1#2}\futurelet\mathlig@next@tok\mathlig@next}%
%The forw macro uses chck to try all possible suffixes, passing control
%either to one of those, or to the back macro
 \mathlig@defcs{mathlig@forw@#1#2}{%
  \mathlig@let@cs\mathlig@next{mathlig@back@#1#2}%
  \mathlig@let@cs\checker{mathlig@chck@#1#2}%
  \mathlig@let@cs\mathligtoks{mathlig@toks@#1#2}%
  \expandafter\ifx\expandafter\mathlig@delim\mathligtoks\mathlig@delim\relax\else
  \expandafter\checker\mathligtoks\mathlig@delim\fi
  \mathlig@next
 }%
%The toks macro just stores the suffixes
 \mathlig@defcs{mathlig@toks@#1#2}{}%
%The chk macro goes through the suffixes one by one
%tail recursing until it runs out, or finds one.
 \mathlig@defcs{mathlig@chck@#1#2}##1##2\mathlig@delim{%
  %\message{Lig so far '#1#2', checking for '##1'}%
  \ifx\mathlig@next@tok##1%
   \mathlig@let@cs\mathlig@next@cmd{mathlig@look@#1#2##1}\let\mathlig@next\mathlig@gobble
  \fi
  \ifx\mathlig@delim##2\mathlig@delim\relax\else
   \csname mathlig@chck@#1#2\endcsname##2\mathlig@delim
  \fi
 }
  \ifx\mathlig@delim#2\mathlig@delim\else
  \mathlig@defcs{mathlig@back@#1#2}{\csname mathlig@back@#1\endcsname #2}%
 \fi
}%
\mathchardef\ordinarycolon\mathcode`\:
\def\vcentcolon{\mathrel{\mathop\ordinarycolon}}
\newtheorem{theo}{Theorem}[section]
\newtheorem{cor}[theo]{Corollary}
\newtheorem{prop}[theo]{Proposition}
\newtheorem{mr}[theo]{Conjecture}
\newtheorem{rem}[theo]{Remark}
\begin{document}

% Title and author
\title{A new numerical approach to Anderson (de)localization}
%\author{Gregory~Berkolaiko}
%%\thanks{The first author was partially supported by the ANR project DYNOP and the
%%ESF program "Harmonic and Complex Analysis and its
%%Applications"}
%\address{Department of Mathematics, Texas A\&M University, Mailstop 3368,
% College Station, TX  77843, USA}
%\email{berko@mah.tamu.edu}
\author{Constanze~Liaw\\Department of Mathematics\\Texas A\&M University\\Mailstop 3368\\College Station, TX  77843\\USA\\email: conni@math.tamu.edu}
\thanks{The author is  partially supported by the NSF grant DMS-1101477.}

%\keywords{Discrete random Schr\"odinger operator, Anderson localization}
% \subjclass[2010]{}%47A16, 47B80, 81Q15}

\begin{abstract}
We develop a new approach for the Anderson localization problem. The implementation of this method yields strong numerical evidence leading to a (surprising to many) conjecture: The two dimensional discrete random Schr\"odinger operator with small disorder allows states that are dynamically delocalized with positive probability. This approach is based on a recent result by Abakumov--Liaw--Poltoratski which is rooted in the study of spectral behavior under rank-one perturbations, and states that every non-zero vector is almost surely cyclic for the singular part of the operator.

The numerical work presented is rather simplistic compared to other numerical approaches in the field. Further, this method eliminates effects due to boundary conditions.

While we carried out the numerical experiment almost exclusively in the case of the two dimensional discrete random Schr\"odinger operator, we include the setup for the general class of Anderson models called Anderson-type Hamiltonians.% For many such models, numerical experiments should be possible.

We track the location of the energy when a wave packet initially located at the origin is evolved according to the discrete random Schr\"odinger operator.

This method does not provide new insight on the energy regimes for which diffusion occurs.
\end{abstract}

\maketitle

%%%%%%%%%%%%%%%%%%%%%%%%%%%%%
\section{Introduction}

In 1958 P.~W.~Anderson \cite{And1958} suggested that sufficiently large impurities in a semi-conductor could lead to spatial localization of electrons, called Anderson localization.
Although many physicists consider the problem solved, many mathematical questions with striking physical relevance remain open. The field has grown into a rich mathematical theory (see \cite{Germ} and \cite{Klo2} for the study of different Anderson models; for refined notions of Anderson localization see \cite{SIMSULE} and \cite{ExSpec}).

We consider the discrete random Schr\"odinger operator in dimension $d$ which is given by the self-adjoint operator
$$
H_\omega = - \bigtriangleup + \sum_{i\in \Z^d} \omega_i <\fdot, \delta_i> \delta_i
\text{ on }l^2(\Z^d).
$$
Here $\delta_i\in l^2(\Z^d)$ assumes the value 1 in the $i-$th entry, $i = (i_1, i_2, \hdots, i_d)$, and zero in all other entries (see equation \eqref{e-example} below for an example in two dimensions). The random variables $\omega_i$ are i.i.d.~with uniform distribution in $[-c,c]$, i.e.~according to the probability distribution $\mathbb{P} = (2c)^{-1} \Pi_i  \chi\ci{[-c,c]} dx$. The Laplacian describes a crystal with atoms located at the integer lattice points $\Z^d$. Adding the random part can be interpreted as having the atoms being not perfectly on the lattice points, but randomly displaced.

This paper pertains to one of the ``weaker" definitions of localization which is equivalent to the property
that the spectrum of the operator is almost surely purely singular.

It is known that for $d=1$, Anderson localization is produced by random disorders of any strength and at all energies (analytic proof, see e.g.~\cite{CFKS}, \cite{CL}, or \cite{FP}).

For $d\ge 2$, localization is proved analytically for disorders $c$ above a certain dimension-dependent threshold $c_d$. The first result of this type can be found in \cite{FS}. Simpler proofs and better constants can be found in \cite{AizMol1993} as well as \cite{SIMREV}.% As far as the author is aware, the analytic proofs yield for the critical dimension $d=2$ a best lower threshold $c_0\approx 43$ above which all states are localized. (Clearly this constant is not sharp.)

Diffusion may hence only occur for small disorder $c$. This is precisely the question we are addressing:
Does the two dimensional discrete random Schr\"odinger operator exhibit Anderson localization for small disorder with non-zero probability?

The main contribution of this paper is the introduction of a new numerical approach, and its implementation for the two dimensional discrete random Schr\"odinger operator. This application supports the following conjecture which overthrows the widely spread belief that localization takes place for random disorders of any strength.

\begin{mr}[Delocalization conjecture]\label{t-mr}
For disorder $c \lesssim 0.7$, the two dimensional discrete random Schr\"odinger operator does not exhibit Anderson localization with positive probability; in the sense that it has non-zero absolutely continuous spectrum with positive probability. In particular, we do not have dynamical localization with positive probability for small disorder.
\end{mr}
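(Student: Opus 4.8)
\emph{Proof strategy (numerical).} Conjecture~\ref{t-mr} is not the kind of statement one expects to settle in closed form; the plan is to combine two rigorous deterministic facts with a finite-volume numerical experiment. First I would use ergodicity of $H_\omega$: by the standard theory of ergodic Schr\"odinger operators the almost-sure a.c., s.c.\ and pure point spectra of $H_\omega$ are deterministic sets, so ``$H_\omega$ has nonzero a.c.\ spectrum'' is a $0$--$1$ event; it therefore suffices to refute the complementary alternative, that $H_\omega$ is purely singular almost surely --- which is exactly the weak form of Anderson localization discussed in the Introduction.

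Suppose, toward a contradiction, that $H_\omega$ is purely singular a.s. Then $H_\omega = H_\omega^{\mathrm{sing}}$ a.s., and the Abakumov--Liaw--Poltoratski cyclicity result applies: for a.e.\ $\omega$ \emph{every} nonzero vector, in particular the wave packet $\delta_0$ at the origin, is cyclic for $H_\omega^{\mathrm{sing}} = H_\omega$. Equivalently, the scalar spectral measure $\mu_{\delta_0}$ is, up to equivalence, a maximal spectral type, so the triple $(e^{-itH_\omega}, l^2(\Z^2), \delta_0)$ is unitarily equivalent to multiplication by $e^{-it\lambda}$ on $L^2(\mu_{\delta_0})$ with $\mu_{\delta_0}$ \emph{singular}. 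In this hypothetical world the entire operator --- its spectral type and all of its transport --- is encoded in the time evolution of the single packet $\delta_0$, and that evolution is governed by a purely singular measure.

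Next I would test this encoded picture numerically. On a large box (or torus) in $\Z^2$ I would evolve $\psi_t := e^{-itH_\omega}\delta_0$ for sampled disorders $c$ and for times $t$ short enough that the finitely supported approximation of $\psi_t$ has not reached the boundary; by the finite propagation speed of the discrete unitary evolution this truncation is \emph{exact} up to that time, which is precisely how boundary conditions are eliminated. I would then track ``the location of the energy'': the spreading $\sum_x |x|^2\,|\psi_t(x)|^2$, the fraction of $\|\psi_t\|^2$ escaping a ballistic ball of radius $\sim vt$, and the growth of $\dim\spa\{\psi_{t_j}\}$ over sampled times $t_j$. The numerical finding is that for $c\lesssim 0.7$ one sees persistent, non-saturating transport, in sharp contrast to the saturated, exponentially confined behaviour at large $c$; transport of $\delta_0$ of this strength is incompatible with $\mu_{\delta_0}$ being a singular maximal spectral type, so the purely-singular alternative fails. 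By the $0$--$1$ law above, $H_\omega$ then has nonzero a.c.\ spectrum almost surely, hence with positive probability; and since dynamical localization forces pure point --- hence purely singular --- spectrum, it too is excluded.

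The rigorous pieces --- the ergodic determinism and the cyclicity input --- are not where the difficulty lies; the hard part is turning ``transport incompatible with a singular maximal spectral type'' into a genuine dichotomy that a finite computation can certify. Singular \emph{continuous} measures can themselves produce anomalous spreading, so one must either drive the numerics into a genuinely ballistic regime, where only an absolutely continuous component can live, or invoke the (itself unproven) expectation that singular spectrum for this model is automatically pure point; and one must control the finite-size and finite-time truncations well enough that the double limit $N\to\infty$, $t\to\infty$ is trustworthy. It is this passage from a finite experiment to an infinite-volume spectral statement --- not any single estimate --- that keeps the result at the level of a conjecture.
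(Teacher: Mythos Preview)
Your strategy and the paper's share the same analytic backbone --- the Abakumov--Liaw--Poltoratski cyclicity theorem --- but the numerical observables are genuinely different. The paper does \emph{not} evolve $e^{-itH_\omega}\delta_0$ and watch transport. Instead it works with the polynomial (Krylov) orbit $\{H_\omega^k\delta_{00}:k=0,\dots,n\}$ and computes the single scalar
\[
D_{\omega,c}^n=\dist\bigl(\delta_{11},\,\spa\{H_\omega^k\delta_{00}:k\le n\}\bigr)
\]
via Gram--Schmidt (three-term recurrence). The point of Corollary~\ref{c-tool} is that the implication ``$D_{\omega,c}^n\nrightarrow 0$ with positive probability $\Rightarrow$ nonzero a.c.\ spectrum with positive probability'' is \emph{rigorous}; the only numerical task is to argue that the decreasing sequence $D_{\omega,c}^n$ has a positive limit, which the paper does by rescaling $n\mapsto n^{-a}$ and extrapolating a linear fit to the intercept. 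Your observable --- second moments and mass escape of $\psi_t$ --- does not come with a clean rigorous implication toward a.c.\ spectrum, as you yourself note (singular continuous measures can transport); the paper's distance criterion sidesteps that ambiguity entirely.

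There is also a concrete error in your boundary argument. For the bounded discrete Laplacian, $e^{-itH_\omega}\delta_0=\sum_{k\ge0}\frac{(-it)^k}{k!}H_\omega^k\delta_0$ is supported on \emph{all} of $\Z^2$ for every $t>0$; one has Lieb--Robinson-type exponential tails but not strict finite propagation, so the truncation is never ``exact''. By contrast, $H_\omega^k\delta_{00}$ has support contained in the $\ell^1$-ball of radius $k$, so on a $(2n{+}1)\times(2n{+}1)$ grid the paper's computation is \emph{literally} the infinite-volume one for $k\le n$ --- that is what ``eliminates boundary effects'' means here. Finally, your $0$--$1$ upgrade via ergodicity is correct in principle, but the paper deliberately states the conjecture only ``with positive probability'': some realizations at small $c$ produced an inconclusive rescaling exponent (see Subsection~\ref{ss-a}), and the authors chose not to outrun their data.
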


While this conjecture is based on deep analytical results, we included an explicit statement of the main tool, see Corollary \ref{c-tool}, so that the applied numerical sections \ref{s-setup} through \ref{s-final} are essentially self-contained. We would like to point out the important feature of this Corollary which makes a numerical experiment feasible: It suffices to track the evolution (under the random Hamiltonian) of just one vector!

In Section \ref{s-PRE}, we introduce Anderson-type Hamiltonians, a very general notion of Anderson model which includes many of those studied in literature. The methods described within this paper can be extended to most Anderson-type Hamiltonians. We further explain the notion of singular and absolutely continuous spectrum, as well as the corresponding parts of the operator.
In Section \ref{s-theoretical} we state an improvement, Theorem \ref{t-AndHam}, of a result by Jaksic and Last concerning the cyclicity of vectors for the Anderson-type Hamiltonian. We further explain how analytic results are used to prove the main tool, Corollary \ref{c-tool}.
Section \ref{s-setup} is devoted to a description of the numerical experiment.
A summary of numerical results and the conclusions can be found in Section \ref{ss-results}.
In Section \ref{s-supp} we verify the performance of the method in many examples, e.g.~for large disorder, for the free/unperturbed two dimensional discrete Schr\"odinger operator, for the one dimensional discrete random Schr\"odinger operator. We further include an investigation of the distribution of energies after repeated application of the random operator of a wave packet initially located at the origin.
We briefly remark on computing and memory requirements in Section \ref{s-final}.

{\bf Acknowledgements.} The author would like to thank A.~Poltoratski for suggesting the initial mathematical idea of the experiment, and G.~Berkolaiko for the insightful discussions concerning many aspects of this research as well as for reading and making useful comments on most parts of this paper. %: the idea of re-scaling in Figure \ref{otherfigure}, the idea to include lower estimates using lines through two points leading to the right hand side of equation \eqref{e-L} and the idea of plotting the y-intercepts as functions of c (Figure \ref{mainfigure}).
% Further, we would like to thank Wolfgang Bangerth, Timo Heister, Guido Kanschat, Jean Ragusa (all of whom are involved in Scientific Computing at Texas A\&M University) for listening to our ideas and providing us with many useful suggestions at the early stages of this project. We would like to thank Christopher Ward the System Analyst for many useful discussions regarding memory. Last but certainly not least, 
Further, she would like to thank J.~Kuehl for running initial experiments using a code of his, as well as for being such a wonderful husband.

%%%%%%%%%%%%%%%%%%%%%%%%%%%%%%%%%%%%%%%%%%%%%%
\section{Preliminaries}\label{s-PRE}
%%%%%%%%%%%%%%%%%%%%%%%%%%%%%%%%%%%%%%%%%%%%%%

%%%%%%%%%%%%%%%%%%%%%%%%%%%%%%%%%%%%%%%%%%%%%%%%%%%%%
\subsection{Anderson-type Hamiltonians, discrete Schr\"odinger operator}\label{ss-Hw}
%%%%%%%%%%%%%%%%%%%%%%%%%%%%%%%%%%%%%%%%%%%%%%%%%%%%%

While the numerical experiment within pertains to the discrete random Schr\"odinger operator, we define so-called Anderson-type Hamiltonians which were first introduced in \cite{JakLast2000}. The advantage of making this general definition is that this notion is a generalization of many Anderson models discussed in literature. In particular, the method described within this paper can be applied to many other interesting Anderson models.

For $n\in\N$ consider the probability space $\Omega_n=(\R, \mathcal B, \mu_n)$, where $\mathcal B$ is the Borel sigma-algebra on $\R$ and $\mu_n$
is a Borel probability measure. Let $\Omega=\prod_{n=0}^\infty \Omega_n$ be a product space with the probability measure $\p$ on
$\Omega$  introduced as the product measure of the corresponding measures on $\Omega_n$ on the product sigma-algebra $\mathcal A$. The elements of $\Omega$ are points in
$\R^\infty$,  $\omega=(\omega_1,\omega_2,...)$ for $ \omega_n\in\Omega_n$.

Let $\cH$ be a separable Hilbert space and let $\{\f_n\}_{n\in\N}$ be a countable
collection of unit vectors in $\cH$. For each $\omega\in\Omega$ define an Anderson-type Hamiltonian on $\cH$ as a self-adjoint operator formally given by
\begin{equation}\label{Model}
H_\omega = H + V_\omega, \text{ where } V_\omega = \sum\limits_n \omega_n <\fdot, \f_n>\f_n.
\end{equation}

Except for degenerate cases, the perturbation $V_\omega$ is almost surely a non-compact operator. It is hence not possible to apply results from classical perturbation theory to study the spectra of $H_\omega$, see e.g.~\cite{birst} and \cite{katobook}.

In the case of an orthogonal sequence $\{\f_n\}$, this operator was studied in \cite{JakLast2000} and \cite{JakLast2006}.
Probably the most important special case of an Anderson-type Hamiltonian is the discrete random Schr\"odinger operator on $l^2(\Z^d)$
\begin{align}\label{d-RandSchr}
Hf(x)=-\bigtriangleup f (x) = - \sum\limits_{|n|=1} (f(x+n)-f(x)), \text{ with } \f_n(x)=\delta_n(x)=
\left\{\begin{array}{ll}1&x=n\in\Z^d,\\ 0&\text{else.}\end{array}\right.
\end{align}

%%%%%%%%%%%%%%%%%%%%%%%%%%%%%%%%%%%%%%%%%%%%%%%%%%%%%
\subsection{Singular and absolutely continuous parts of normal operators}\label{pre-normal}
%%%%%%%%%%%%%%%%%%%%%%%%%%%%%%%%%%%%%%%%%%%%%%%%%%
Recall that an operator in a separable Hilbert space is called normal if $T^*T= TT^*$. By the spectral theorem operator $T$ is unitarily equivalent to $M_z$, multiplication by the independent variable $z$, in a direct sum of  Hilbert spaces
$$\cH = \oplus \int \cH(z)\, d\mu(z)$$ where $\mu$ is a scalar positive  measure  on $\C$. The measure $\mu$ is called a scalar spectral measure
of $T$.

If $T$ is a unitary or self-adjoint operator, its spectral measure $\mu$ is supported on the unit circle or on the real line, respectively.
Via Radon decomposition, $\mu$ can be decomposed into a singular and absolutely continuous parts $\mu=\mu\ti{s}+\mu\ti{ac}$.
The singular component $\mu\ti{s}$ can be further split into singular continuous and pure point parts.
For unitary or self-adjoint $T$ we denote by $T\ti{ac}$  the restriction of $T$ to its absolutely continuous part, i.e.~$T\ti{ac}$ is unitarily equivalent to $M_t\big|_{\oplus \int \cH(t) d\mu\ti{ac}(t)}.$ Similarly, define the singular, singular continuous and the pure point parts of  $T$, denoted by $T\ti{s}$, $T\ti{sc}$ and $T\ti{pp}$, respectively.

%%%%%%%%%%%%%%%%%%%%%%%%%%%%%%%%%%%%%%%%%%%%%
\section{Theoretical background and the main tool}\label{s-theoretical}
%%%%%%%%%%%%%%%%%%%%%%%%%%%%%%%%%%%%%%%%%%%%%
Let us explain how the main theoretical tool used to indicate delocalization, Corollary \ref{c-tool}, is deduced from the following theorem. Let us remind the reader that we use the term delocalization to mean the existence of absolutely continuous spectrum almost surely, and that such delocalization implies dynamical delocalization.

A sequence $\{\f_n\}\subset \cH$ is called a \emph{representing system}, if every vector $\f\in\cH$ can be represented as a series
$
\f=\sum a_n \f_n
$
that converges with respect to the norm of $\cH$.
Note that, bases are representing systems. However, unlike in the case of a basis, a representation of a vector need not be unique.

We use the following Theorem, see \cite{AbaLiawPolt}.

\begin{theo}\label{t-AndHam}
Let $H_\omega$ be the Anderson-type Hamiltonian introduced in equation \eqref{Model}.
Suppose that the probability measure $\p$ is a product of absolutely continuous measures and  $\{\f_n\}$ is a representing system in $\cH$. Assume that
there exists a vector $\psi\in\cH$ that is cyclic for  $H_\omega$, $\p$-almost surely. Then any non-zero $\f\in \cH$ is cyclic for $H_\omega$, $\p$-almost surely.
\end{theo}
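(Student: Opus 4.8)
The plan is to reduce the statement to a single-coordinate (rank one) problem and then bootstrap using the product structure together with the representing system. First, the hypothesis forces $H_\omega$ to have \emph{simple} spectrum $\p$-almost surely: if $\psi$ is cyclic for $H_\omega$, then by the spectral theorem $H_\omega$ is unitarily equivalent to $M_x$ on $L^2(\mu_\psi^{H_\omega})$, which has uniform multiplicity one. Write $\mathcal K_\omega(v):=\clos\spa\{f(H_\omega)v: f\ \text{bounded Borel}\}$ for the cyclic subspace of a vector $v$, and set $\mathcal C_n(\omega):=\mathcal K_\omega(\f_n)$; this is an $H_\omega$-reducing subspace and, being the cyclic subspace of $\f_n$, it is unchanged if $\omega_n$ is replaced by any other value. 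Since $\{\f_n\}$ is a representing system, $\clos\spa\{\f_n\}=\cH$, hence $\bigvee_n\mathcal C_n(\omega)=\cH$ for every $\omega$. Therefore it suffices to show that, $\p$-a.s., $\mathcal C_n(\omega)\subseteq\mathcal K_\omega(\f)$ for every $n$.

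The analytic heart is a rank one transfer lemma: \emph{if the unit vector $\xi$ is cyclic for a self-adjoint $A$ and $A_t:=A+t\langle\fdot,\xi\rangle\xi$, then for Lebesgue-a.e.\ $t$ every non-zero vector is cyclic for $A_t$.} I would prove it from Krein's resolvent identity $(A_t-z)^{-1}\eta=(A-z)^{-1}\eta-\frac{t\,\langle(A-z)^{-1}\eta,\xi\rangle}{1+tF(z)}(A-z)^{-1}\xi$, with $F(z):=\langle(A-z)^{-1}\xi,\xi\rangle$: first, $\xi$ is again cyclic for $A_t$, so $A_t$ has simple spectrum; second, for fixed $\eta\ne 0$ one shows that $\xi\in\mathcal K_{A_t}(\eta)$ for all $t$ outside the (discrete) zero set of an explicit meromorphic function of $t$ built from the Borel transforms of $\eta$ and $\xi$, the point being that $\eta$ cannot be orthogonal to the cyclic subspace of $\xi$ for $A$, which is all of $\cH$; once $\xi\in\mathcal K_{A_t}(\eta)$ one has $\mathcal K_{A_t}(\eta)\supseteq\mathcal K_{A_t}(\xi)=\cH$. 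The singular part requires the usual Aronszajn--Donoghue bookkeeping, and this is where most of the technical work lies; this lemma is essentially the content of \cite{AbaLiawPolt}.

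Next I would apply the lemma coordinatewise. Fix $n$ and condition on $\omega^{(n)}:=(\omega_m)_{m\ne n}$. On the reducing subspace $\mathcal C_n(\omega)$ the vector $\f_n$ is cyclic, and $H_\omega\big|_{\mathcal C_n(\omega)}=B\big|_{\mathcal C_n(\omega)}+\omega_n\langle\fdot,\f_n\rangle\f_n$, where $B$ does not depend on $\omega_n$; this is a rank one family in the cyclic direction $\f_n$. Provided $P_{\mathcal C_n(\omega)}\f\ne 0$, the transfer lemma gives that for Lebesgue-a.e.\ (hence, since $\mu_n$ is absolutely continuous, $\mu_n$-a.e.)\ value of $\omega_n$, $P_{\mathcal C_n(\omega)}\f$ is cyclic for $H_\omega\big|_{\mathcal C_n(\omega)}$. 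Since $H_\omega$ a.s.\ has simple spectrum, $H_\omega\big|_{\mathcal C_n(\omega)}$ and $H_\omega\big|_{\mathcal C_n(\omega)^\perp}$ have mutually singular scalar spectral measures, so $\mathcal K_\omega(\f)=\mathcal K_\omega\big(P_{\mathcal C_n(\omega)}\f+P_{\mathcal C_n(\omega)^\perp}\f\big)$ contains $\mathcal K_\omega\big(P_{\mathcal C_n(\omega)}\f\big)=\mathcal C_n(\omega)$. By Fubini this holds for $\p$-a.e.\ $\omega$, and intersecting over the countably many $n$: $\p$-a.s., $\mathcal C_n(\omega)\subseteq\mathcal K_\omega(\f)$ for every $n$ with $\f\not\perp\mathcal C_n(\omega)$. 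If $\f\not\perp\mathcal C_n(\omega)$ for all $n$, the first paragraph finishes the proof.

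The step I expect to be the main obstacle is ruling out, for each $n$, the event $\mathcal E_n:=\{\f\perp\mathcal C_n(\omega)\}$; on $\mathcal E_n$ one has $\mathcal K_\omega(\f)\perp\mathcal C_n(\omega)$, so $\f$ is not cyclic. Note $\mathcal E_n$ depends only on $\omega^{(n)}$, so varying $\omega_n$ is useless; instead one varies a second coordinate $\omega_m$, $m\ne n$. Via Krein's formula, $\omega_m\mapsto\langle\f,(H_\omega-z)^{-1}\f_n\rangle$ is, for each $z$, a non-constant meromorphic function of $\omega_m$ unless a degenerate identity links the Borel transforms attached to $\f$, $\f_n$, $\f_m$; off such degeneracies $\mathcal E_n$ is $\mu_m$-null, and the degenerate configurations must themselves be shown negligible. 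The clean way to organize this is a recursion: on $\mathcal E_n$ the hypotheses descend to the reducing subspace $\mathcal C_n(\omega)^\perp\ni\f$ --- the projected family $\{P_{\mathcal C_n(\omega)^\perp}\f_m\}_{m\ne n}$ is still a representing system there and $P_{\mathcal C_n(\omega)^\perp}\psi$ is still cyclic --- so the argument can be repeated inside $\mathcal C_n(\omega)^\perp$; what must be shown is that this recursion cannot stabilize at a proper reducing subspace. Carrying this through in genuinely infinite dimensions --- preventing a proper reducing subspace that permanently shields $\f$ from some $\f_n$ --- is the delicate point, and is precisely where the absolute continuity of the $\mu_n$ and the representing-system hypothesis are used together.
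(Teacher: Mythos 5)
First, a point of comparison: this paper does not prove Theorem \ref{t-AndHam} at all --- it is imported from \cite{AbaLiawPolt} (``We use the following Theorem, see \cite{AbaLiawPolt}''), so there is no in-paper argument to measure your proposal against; what follows assesses the proposal on its own terms. Your overall architecture (reduce to the cyclic subspaces $\mathcal{C}_n(\omega)$ of the $\f_n$, note that the representing-system hypothesis gives $\bigvee_n\mathcal{C}_n(\omega)=\cH$, and exploit that varying the single coordinate $\omega_n$ is a rank-one perturbation in the cyclic direction $\f_n$, combined with Fubini and absolute continuity of the $\mu_n$) is the right skeleton and matches the strategy of \cite{AbaLiawPolt}. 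But two things are wrong or missing. The ``rank one transfer lemma'' is false as you state it: the quantifier order ``for a.e.\ $t$, every non-zero vector is cyclic for $A_t$'' cannot hold, since for every fixed $t$ the vector $\chi_E(A_t)\xi$ is non-zero and non-cyclic whenever $0<\mu_\xi^{A_t}(E)<\mu_\xi^{A_t}(\R)$. The true statement fixes the vector first (for each $\eta\neq0$ the exceptional set of $t$ is Lebesgue-null), which is what your application actually needs since $P_{\mathcal{C}_n(\omega)}\f$ does not depend on $\omega_n$; but even then, the result established in \cite{AbaLiawPolt} by the Krein/Aronszajn--Donoghue analysis you describe is cyclicity for the \emph{singular part} $(A_t)\ti{s}$ only. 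Your argument silently upgrades this to cyclicity for the full restricted operator, absolutely continuous part included, and nothing in your sketch explains how the hypothesis on $\psi$ (which you use only to extract simplicity of the spectrum) supplies the absolutely continuous half of the cyclicity.

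Second, and more seriously, the step you yourself flag as ``the main obstacle'' --- showing that $\p\left(\f\perp\mathcal{C}_n(\omega)\right)=0$ for each $n$ --- is exactly where the proof has to close, and your treatment of it is not a proof: ``off such degeneracies'', ``the degenerate configurations must themselves be shown negligible'', and ``what must be shown is that this recursion cannot stabilize at a proper reducing subspace'' are statements of what remains to be done, not arguments. On the event $\f\perp\mathcal{C}_n(\omega)$ the vector $\f$ is certainly not cyclic, so the theorem is essentially equivalent to ruling this event out; as long as that is open, the proposal establishes nothing beyond the (correct) reduction. I would therefore not accept this as a proof of Theorem \ref{t-AndHam}; it is an accurate map of where the difficulties lie, with the central difficulty left unresolved, and for a complete argument you should consult \cite{AbaLiawPolt} directly.
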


It is well-known that if an Anderson-type Hamiltonian $H_\omega$  is purely singular almost surely then it is cyclic almost surely. Equivalently, if such an operator is not cyclic with positive probability, then there are energies which are diffusive with non-zero probability. A proof of almost-sure cyclicity of the singular part $(H_\omega)\ti{s}$ and almost-sure cyclicity of certain specific vectors can be found in \cite{JakLast2006} and for the discrete Schr\"odinger operator in \cite{Sim1994}. Together with the latter theorem and the fact that the $\delta_n$ form a basis of $l^2(\Z^d)$ we obtain the following result.

The following two statements were formulated by A.~Poltoratski (private communications).

\begin{cor}
Assume the hypotheses of Theorem \ref{t-AndHam}.
If Anderson localization occurs, then the orbit of any non-zero vector under the operator $H_\omega$ is almost surely dense in the Hilbert space $\mathcal{H}$.

In other words, fix $0\neq f\in \cH$.
If $H_\omega$ has purely singular spectrum almost surely, then for all $v\in\cH$ with norm $ \|v\|_\cH = 1$ the distance from the vector $v$ to the span of the orbit of $f$ under the operator is zero with respect to $\p$-almost surely, i.e.~with respect to $\p$-almost surely
$$
\dist(v,\spa\{H^k_\omega f: k\in \{0, 1, 2, \hdots, n\}\})\stackrel{n\to\infty}{\longrightarrow} 0 .
$$
\end{cor}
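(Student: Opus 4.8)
The plan is to deduce the Corollary directly from Theorem~\ref{t-AndHam} by first establishing almost-sure cyclicity of a \emph{single} vector under the hypothesis of Anderson localization, and then translating the resulting cyclicity statement into the asserted density of the orbit. First I would recall that, by hypothesis, $H_\omega$ has purely singular spectrum $\p$-almost surely, so $H_\omega=(H_\omega)\ti{s}$ almost surely; the quoted results of Jak\v{s}i\'c--Last \cite{JakLast2006} (and Simon \cite{Sim1994} in the discrete Schr\"odinger case) then supply a vector $\psi\in\cH$ that is cyclic for $(H_\omega)\ti{s}=H_\omega$, $\p$-almost surely. Since $\{\delta_n\}$ is an orthonormal basis of $l^2(\Z^d)$, it is in particular a representing system, and the probability measure $\p$ is a product of absolutely continuous measures by assumption; hence all hypotheses of Theorem~\ref{t-AndHam} are met, and we conclude that \emph{every} non-zero $f\in\cH$ is cyclic for $H_\omega$, $\p$-almost surely.

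The second step is purely a matter of unwinding the definition of cyclicity. Fix $0\neq f\in\cH$. By the previous paragraph there is a full-measure event $\Omega_f\subset\Omega$ on which $f$ is cyclic for $H_\omega$, meaning that the closed linear span of $\{H_\omega^k f : k\geq 0\}$ equals $\cH$. On this event, for any $v\in\cH$ with $\|v\|_\cH=1$ and any $\varepsilon>0$ there exist $N$ and scalars $a_0,\dots,a_N$ with $\|v-\sum_{k=0}^N a_k H_\omega^k f\|_\cH<\varepsilon$, so that for all $n\geq N$ we have $\dist(v,\spa\{H_\omega^k f : 0\leq k\leq n\})<\varepsilon$. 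This is exactly the assertion that the distance tends to $0$ as $n\to\infty$, and it holds for every unit vector $v$ simultaneously on $\Omega_f$ (the quantifier on $v$ causes no measurability trouble since the event ``$f$ is cyclic'' already encodes density of the span, which is a statement about a fixed $\omega$). Rescaling handles vectors $v$ of arbitrary norm, giving the ``orbit is dense in $\cH$'' phrasing in the first sentence of the Corollary.

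I expect the only real subtlety — and it is a soft one rather than a computational obstacle — to be the justification that purely singular spectrum implies almost-sure cyclicity of \emph{some} vector, i.e.\ that $(H_\omega)\ti{s}$ has a cyclic vector almost surely. This is where the external input from \cite{JakLast2006} and \cite{Sim1994} is genuinely used: heuristically, $V_\omega$ adds rank-one perturbations along the $\delta_n$, and spectral-averaging / rank-one perturbation arguments show that the singular part of the resolvent measures is almost surely supported in a way that makes a suitable combination of the $\delta_n$ (or even a single $\delta_n$) cyclic for the singular part. Since we are allowed to cite earlier results, I would simply invoke this and not reprove it. A minor point worth a sentence is to note that the theorem's hypothesis ``there exists $\psi$ cyclic for $H_\omega$ $\p$-a.s.'' does not require $\psi$ to be deterministic-independent-of-$\omega$ in any strong sense beyond what the cited cyclicity results provide, and that measurability of the relevant events is standard. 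With those remarks in place the Corollary is immediate.
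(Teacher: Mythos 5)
Your proposal is correct and follows essentially the same route the paper takes: the paragraph preceding the Corollary supplies exactly your first step (purely singular spectrum implies almost-sure existence of a cyclic vector, via Jak\v{s}i\'c--Last and Simon), Theorem \ref{t-AndHam} then upgrades this to cyclicity of every non-zero vector, and the distance statement is just the definition of cyclicity unwound. No substantive differences to report.
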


More specifically, we will apply the following immediate consequence.

\begin{cor}\label{c-tool}
Consider the discrete random Schr\"odinger operator given by equations \eqref{Model} and \eqref{d-RandSchr} in dimension $d=2$. Let $\omega_i$, $i\in \Z^2$, be i.i.d.~random variables with uniform (Lebesgue) distribution on $[-c,c]$, $c>0$. To prove delocalization (i.e.~the existence of absolutely continuous spectrum with positive probability), it suffices to find $c>0$ for which the distance
\begin{align}\label{e-dist}
\dist(\delta_{11}, \spa\{H_\omega^k \delta_{00}: k = 0, 1, 2, \hdots, n\}) \nrightarrow 0 \text{ as }n\to\infty
\end{align}
with positive probability.
\end{cor}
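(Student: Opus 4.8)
The plan is to deduce Corollary \ref{c-tool} as a direct specialization of the preceding Corollary, so the proof is essentially a matter of verifying that the hypotheses of Theorem \ref{t-AndHam} hold for the two dimensional discrete random Schr\"odinger operator and then contraposing. First I would record that for this model $\cH = l^2(\Z^2)$, the vectors $\f_n = \delta_n$ for $n\in\Z^2$ form an orthonormal basis, hence in particular a representing system; and the random parameters $\omega_i$ are i.i.d.\ with Lebesgue (uniform) distribution on $[-c,c]$, so $\p$ is a product of absolutely continuous measures. Thus the standing assumptions of Theorem \ref{t-AndHam} are met once we exhibit a single almost surely cyclic vector $\psi$.

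Second, I would invoke the known fact quoted in the text: for the discrete random Schr\"odinger operator, the singular part $(H_\omega)\ti{s}$ is almost surely cyclic (via \cite{Sim1994}, \cite{JakLast2006}), and more to the point, specific vectors such as $\delta_{00}$ are almost surely cyclic for $H_\omega$ when $H_\omega$ happens to be purely singular. Hence if Anderson localization occurs --- i.e.\ $H_\omega$ is purely singular almost surely --- then the hypothesis of the preceding Corollary is satisfied with $f = \delta_{00}$, and we conclude that the orbit span $\spa\{H_\omega^k\delta_{00}: k\in\{0,1,\dots,n\}\}$ is almost surely dense in $l^2(\Z^2)$; in particular $\dist(\delta_{11}, \spa\{H_\omega^k\delta_{00}: k = 0,\dots,n\}) \to 0$ almost surely as $n\to\infty$.

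Third comes the contrapositive, which is the actual content of the statement. Suppose we find some $c>0$ for which \eqref{e-dist} fails to converge to zero with positive probability. By the implication just established, this positive-probability event is incompatible with $H_\omega$ being purely singular almost surely; therefore on a set of positive probability $H_\omega$ has a non-trivial absolutely continuous component, which is exactly the assertion of delocalization. This is ``immediate'' precisely because $\delta_{11}$ is a fixed unit vector and failure of the distance to vanish for even one target vector already obstructs density of the orbit.

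The only genuine subtlety --- and the step I would be most careful about --- is the passage from ``$(H_\omega)\ti{s}$ is cyclic'' to ``$\delta_{00}$ (or some fixed vector) is cyclic for $H_\omega$ when the spectrum is purely singular,'' and then the use of Theorem \ref{t-AndHam} to transfer cyclicity from an unspecified $\psi$ to the concrete vector $\delta_{00}$. One must check that the measurability/almost-sure bookkeeping is consistent: Theorem \ref{t-AndHam} gives that \emph{every} non-zero vector is almost surely cyclic, but the null set may depend on the vector; since we use it only for the two fixed vectors $\delta_{00}$ and (implicitly, through density) the countable dense set needed to phrase density, a countable union of null sets suffices, so this causes no problem. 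Everything else is a routine unwinding of definitions, which is why the result is stated as an immediate consequence.
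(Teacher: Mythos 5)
Your proposal is correct and follows essentially the same route as the paper: the paper presents Corollary \ref{c-tool} as an immediate consequence of the preceding Corollary, which is itself obtained by combining the known almost-sure cyclicity results with Theorem \ref{t-AndHam} (using that the $\delta_n$ form a basis, hence a representing system, and that the uniform distributions make $\p$ a product of absolutely continuous measures), followed by exactly the contrapositive you describe. Your added remark on the vector-dependent null sets is a reasonable precaution but not something the paper dwells on.
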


\begin{rem} The converse of Corollary \ref{c-tool} is not true. Hence we cannot draw any conclusions, if the distance between a fixed (unit) vector and the subspace generated by the orbit of another vector tends to zero. In particular, we cannot conclude that there must be localization. Even if we show \eqref{e-dist} for many or `all' vectors (instead of just $\delta_{11}$), it could be possible that the absolutely continuous part has multiplicity one and that $\delta_{00}$ is cyclic, that is, $l^2(\Z^2) = \clos\spa\{H_\omega^k \delta_{00}: k \in \N\cup\{0\}\}$.
\end{rem}

%%%%%%%%%%%%%%%%%%%%%%%%%%%%%%%%%%%%%%%%%%%%%
\section{Method of numerical experiment}\label{s-setup}
%%%%%%%%%%%%%%%%%%%%%%%%%%%%%%%%%%%%%%%%%%%%%

Let us explain the computational approach used to indicate diffusion.

Consider the discrete Schr\"odinger operator given by \eqref{Model} and \eqref{d-RandSchr} with random variable $\omega$ distributed according to the hypotheses of Corollary \ref{c-tool}. Fix the vectors $\delta_{00}\in l^2(\Z^2)$ and $\delta_{11}\in l^2(\Z^2)$, that is
\begin{align}\label{e-example}
\delta_{00}=
\begin{pmatrix}
\ddots&\vdots&\vdots&\vdots&\vdots&\vdots&\Ddots\\
\hdots&0&0&0&0&0&\hdots\\
\hdots&0&0&0&0&0&\hdots\\
\hdots&0&0&1&0&0&\hdots\\
\hdots&0&0&0&0&0&\hdots\\
\hdots&0&0&0&0&0&\hdots\\
\Ddots&\vdots&\vdots&\vdots&\vdots&\vdots&\ddots
\end{pmatrix} ,
\delta_{11}=
\begin{pmatrix}
\ddots&\vdots&\vdots&\vdots&\vdots&\vdots&\Ddots\\
\hdots&0&0&0&0&0&\hdots\\
\hdots&0&0&0&0&0&\hdots\\
\hdots&0&0&0&0&0&\hdots\\
\hdots&0&0&0&1&0&\hdots\\
\hdots&0&0&0&0&0&\hdots\\
\Ddots&\vdots&\vdots&\vdots&\vdots&\vdots&\ddots
\end{pmatrix} .
\end{align}

Notice that
$$
D_{\omega,c}^n := \dist(\delta_{11}, \text{span}\{H_\omega^k \delta_{00}:k=0,1,2,\hdots,n\})
$$
simply describes the distance between the unit vector $\delta_{11}$ and the subspace obtained taking the closure of the span of the vectors $\delta_{00}, H_\omega \delta_{00}, H_\omega^2 \delta_{00}, \hdots, H_\omega^n \delta_{00}$.

In virtue of Corollary \ref{c-tool}, we obtain delocalization, if we can find $c>0$ for which \eqref{e-dist} happens with non-zero probability.

%In fact, the experiments show that \eqref{e-dist} with probability one: We have not come across and a realization of $\omega$ that showed unusual behavior.

In the numerical experiment, we initially fix $c$ and fix one computer-generated realization of the random variable $\omega$ (with distribution in accordance to the hypotheses of Corollary \ref{c-tool}). We then calculate the distances $D_{\omega, c}^n$ for $n\in\{0, 1, 2, \hdots\}$.
In Subsection \ref{ss-comp} (below) we describe the numerical approach used to compute $D_{\omega,c}^n$.

Assuming that we know $D_{\omega,c}^n$ for $n=0, \hdots , 4500$, let us find a lower estimate for the limit
$$
 D_{\omega,c} : = \lim_{n\to\infty} D_{\omega,c}^n.
$$

\begin{figure}
 \centerline{%\includegraphics[width=3in]{n650c10b.pdf}
 \includegraphics[width=4.7in]{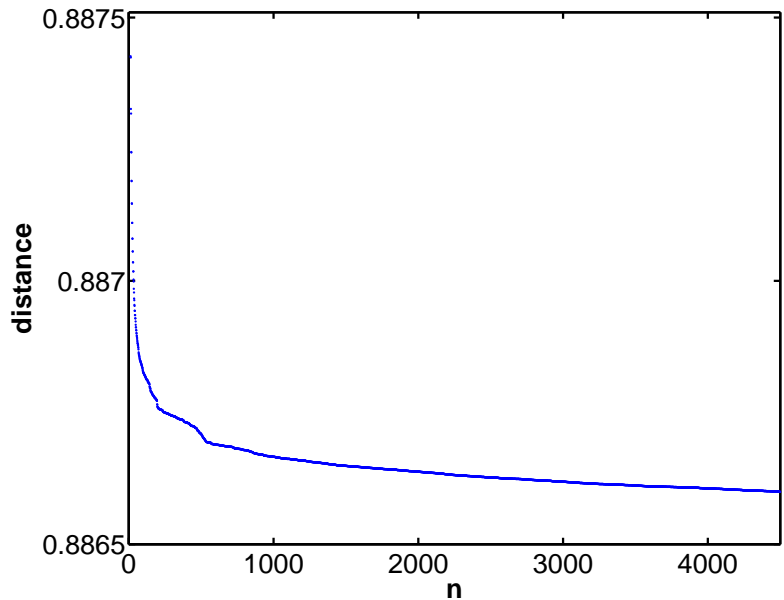}}
\caption{\label{graph-n} Typical trend for the distance $D_{\omega,c}^n$ as a function of $n$. Notice the fine vertical scale.}
\label{figure}
\end{figure}

Figure \ref{graph-n} shows typical trends for the distance $D_{\omega,c}^n$ as a function of $n$. As the first $n=200$ points do not contribute to the value of this limit and were generally rather irregular, we have omitted those points in what follows. (On the side, notice that the graph shows a decreasing function, as expected.)

While those results at this point looked fairly promising, they were not yet satisfactory. Most of all they do not provide a reliable estimate for the limit $D_{\omega,c}$. In order to obtain such an estimate for $D_{\omega,c}$, we re-scaled the horizontal axis in Figure \ref{graph-n} by a negative power $n^{-a}$ (power of the reciprocal, so that the horizontal axis is reverted) and approximated the resulting graph by a line.

The re-scaled graph is shown in Figure \ref{otherfigure}. Subsection \ref{ss-a} contains information about the choice of the re-scaling factor and explains why, for appropriately small disorders, the graph does not decay to zero, e.g.~logarithmically. The subtleties of choosing the re-scaling parameter $a$ are the reason why we do not expect delocalization with probability one in the Delocalization Conjecture \ref{t-mr}, but rather with non-zero probability. This decision is explained further in Subsection \ref{ss-a}

\begin{figure}
 \centerline{
 \includegraphics[width=4.7in]{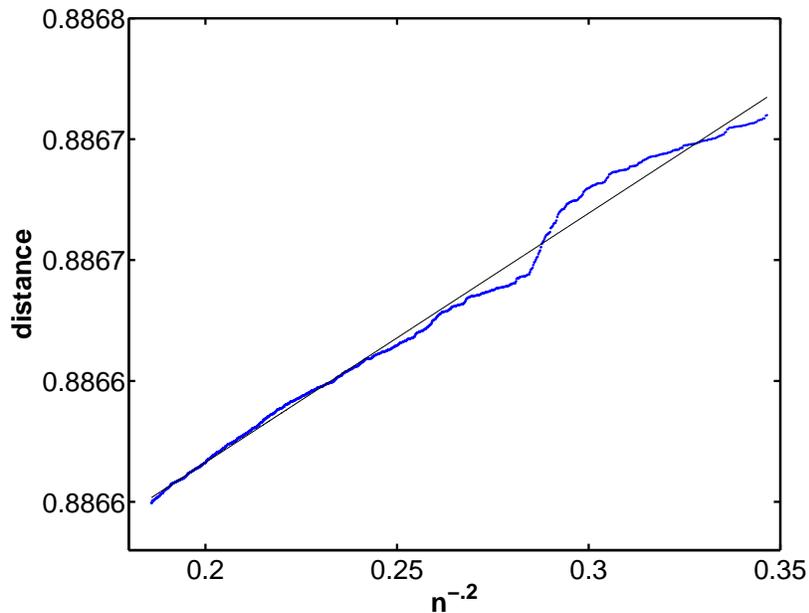}
 }
\caption{Re-scaling of the horizontal axis in Figure \ref{figure} using the best exponent $a=0.2$. The $y-$intercept of the approximating line is the estimate $y_{\omega,c}$ of the value for $D_{\omega,c}$.}\label{otherfigure}
\end{figure}

The value of $D_{\omega,c}$ is estimated by the $y-$intercept $y_{\omega,c}$ of the approximating line.

Since the re-scaled graphs in Figure \ref{otherfigure} were sometimes rather noisy (e.g.~a line through the steeper sections of the graph has a lower $y-$intercept), we decided to include a lower estimate $L_{\omega,c}$ for $y_{\omega,c}$ given by the minimum $y-$intercept of the lines passing through any two consecutive points.

Summarizing the last few steps, we have
\begin{align}\label{e-L}
D_{\omega,c} \approx y_{\omega,c} \ge L_{\omega,c}\,.
\end{align}

Finally, we repeat the experiment for many values of $c$ and many computer-generated realizations for the random variable $\omega$. 
Concerning the different realizations, throughout we took the minimum of $y_{\omega,c}$ and $L_{\omega,c}$ over all the different computer-generated realizations of $\omega$. Roughly the goal is to show that for some $c>0$, the limits $D_{\omega,c}$ are bounded away from zero for many realizations $\omega$.

%%%%%%%%%%%%%%%%%%%%%%%%%%%%%%%%%%%%%%%%%%%%%%%%%%
\subsection{When to fix the realization $\omega$}
%%%%%%%%%%%%%%%%%%%%%%%%%%%%%%%%%%%%%%%%%%%%%%%%%%
In the experiments described here, we had fixed $c$ and $\omega$ at the beginning. For fixed $c=0.1$, we also computed several cases for which we chose a different realization $\omega$ each time we applied he random operator. In other words, for $c=0.1$ we fix countably many realizations of $\omega$ each independently distributed and each in accordance with Corollary \ref{c-tool}. Let those realizations be denoted by $\omega_i$, $i \in \N$. Then we compute the distance between $\delta_{11}$ and the closure of the span of the vectors $\delta_{00}, \,H_{\omega_1}\delta_{00},\, H_{\omega_2}(H_{\omega_1}\delta_{00}), \,H_{\omega_3}(H_{\omega_2}H_{\omega_1}\delta_{00}),$ etc.

The results obtained from this setup agreed very well with the ones described in Section \ref{ss-results} below.

%%%%%%%%%%%%%%%%%%%%%%%%%%%%%%%%%%%%%%%%%%%%%%%%%%
\subsection{Computing the distance $D_{\omega,c}^n$}\label{ss-comp}
%%%%%%%%%%%%%%%%%%%%%%%%%%%%%%%%%%%%%%%%%%%%%%%%%%
For fixed $n$ and $\omega$ let us briefly explain the computational approach to obtain $D_{\omega,c}^n$. The main idea is to apply the Gram--Schmidt orthogonalization process in order to recursively compute $D_{\omega,c}^n$.

Take $m_0 = \delta_{00}$ and $D_{\omega,c}^0 = 1$ (since $\delta_{00}$ and $\delta_{11}$ are orthonormal).

In order to compute $D_{\omega,c}^{n+1}$, assume  we have an orthonormal basis $\{m_0, m_1, m_2, \hdots, m_n\}$ for the linear subspace  
$$
X_n:= 
\text{span}\{H_\omega^k \delta_{00}:k=0,1,2,\hdots,n\}
\text{ of }l^2(\Z^2).
$$

Let us find an orthonormal basis for $X_{n+1}$. According to the Gram--Schmidt orthogonalization process, we define $m_{n+1}$ to be the unit vector in the direction of
$$
H_\omega m_n - \sum_{l=0}^{n} < H_\omega m_n, m_{l}> m_{l} .
$$
The following proposition says that all but the two last terms in the sum are zero. 
We learned this fact and its proof from a conversation with M.~Hastings. This simplification reduces the required memory by the order $n$ (from $\mathcal{O}(n^3)$ to $\mathcal{O}(n^2)$).

\begin{prop}
The vector $H_\omega m_n$ is orthogonal to $m_l$ for all $l=0,1,2,\hdots,n-2$.
\end{prop}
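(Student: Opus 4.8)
The plan is to derive this from the three ingredients behind the Lanczos three-term recursion for self-adjoint operators: that each $m_l$ lies in the Krylov space $X_l$, that $H_\omega$ carries $X_l$ into $X_{l+1}$, and that $H_\omega$ is self-adjoint. Note first that the Gram--Schmidt construction described above makes $\{m_0,\dots,m_j\}$ an orthonormal basis of $X_j$ for every $j\le n$ (this is exactly the inductive hypothesis, applied at the appropriate level), so in particular $m_n$ is orthogonal to $X_{n-1}=\text{span}\{m_0,\dots,m_{n-1}\}$.

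The one computation I would record explicitly is the inclusion $H_\omega X_l\subseteq X_{l+1}$: it follows from $H_\omega(H_\omega^k\delta_{00})=H_\omega^{k+1}\delta_{00}$ together with linearity, and it is the only place the particular form of the generating vectors enters. Combined with the nesting $X_0\subseteq X_1\subseteq\cdots$, this gives, for any fixed $l\le n-2$,
$$
H_\omega m_l \in H_\omega X_l \subseteq X_{l+1}\subseteq X_{n-1},
$$
where the last inclusion uses $l+1\le n-1$.

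Next I would invoke self-adjointness of $H_\omega$, which holds for every realization $\omega$ since $-\bigtriangleup$ is bounded self-adjoint on $l^2(\Z^2)$ and $V_\omega$ is a real diagonal multiplication operator. Then
$$
<H_\omega m_n, m_l> \;=\; <m_n, H_\omega m_l> \;=\; 0,
$$
the second equality because $H_\omega m_l\in X_{n-1}$ while $m_n\perp X_{n-1}$. Since $l\le n-2$ was arbitrary, this is the assertion.

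There is no genuinely hard step here; the only points requiring a little care are the index bookkeeping — one needs $l+1\le n-1$, i.e.\ precisely the range $l\le n-2$ claimed — and the tacit non-degeneracy assumption that the Krylov sequence has not yet run out of new directions, so that $\{m_0,\dots,m_n\}$ really is a basis of an $(n+1)$-dimensional $X_n$. If it had, the vector being normalized in the Gram--Schmidt step would be zero and the recursion would simply terminate, so the proposition is exactly adapted to the situation in which it is used.
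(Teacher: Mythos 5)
Your proof is correct, and its core is the same as the paper's: use self-adjointness to write $<H_\omega m_n, m_l> = <m_n, H_\omega m_l>$ and then observe that $H_\omega m_l$ lies in a subspace to which $m_n$ is orthogonal. The difference is organizational: the paper proves this last fact by induction on $n$, using the explicit Gram--Schmidt recursion to write $H_\omega m_l$ as a combination of $m_{l-1}, m_l, m_{l+1}$, whereas you obtain it directly from the Krylov inclusion $H_\omega X_l \subseteq X_{l+1} \subseteq X_{n-1}$ together with $m_n \perp X_{n-1}$. Your version avoids the induction entirely and is arguably cleaner (the paper's inductive step also silently drops the $m_{l-1}$ term and the normalization constant, which your formulation sidesteps); your closing remark about the non-degenerate case, where the Krylov sequence terminates, is a point the paper leaves tacit as well.
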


Although this result seems to be well-known to the physics community, we include the short proof by mathematical induction on $n$.

\begin{proof}
Consider $n=2$. Assume that we have computed the orthonormal vectors $m_0$, $m_1$ and $m_2$ (via the Gram--Schmidt orthogonalization process).
Since the operator is self-adjoint, we have $< H_\omega m_2, m_{0}> = <m_2, H_\omega m_0>$. Since $m_1 = H_\omega m_0 - <H_\omega m_0,m_0>m_0$ and because $m_2$ is orthogonal to $m_0$ and $m_2$, we obtain
$$
< H_\omega m_2, m_{0}> =  \langle m_2, m_1 + <H_\omega m_0,m_0>m_0\rangle = 0.
$$

Assume that the statement of the proposition is true for some $n-1 \ge 2$. It remains to show that the statement is true for $n$. Assume that we have computed an orthonormal sequence $m_0, m_1, \hdots, m_{n}$.
For $l\le n-2$, it suffices to show that $< H_\omega m_{n}, m_{l}> =0$. By following the argument for the base case, we obtain $< H_\omega m_{n}, m_{l}> = < m_{n}, H_\omega m_{l}> = \langle  m_{n}, m_{l+1} + <H_\omega m_l,m_l>m_l\rangle$. The latter expression equals zero, because we assumed $l\le n-2$ and the orthogonality assumption on $m_0, m_1, \hdots, m_{n}$.
\end{proof}

According to the latter proposition we take $m_{n+1}$ to be the unit vector in the direction of
$$
m_{n+1} = \frac{\tilde m_{n+1}}{\|\tilde m_{n+1}\|_2}, \text{ where } \tilde m_{n+1} = H_\omega m_n - < H_\omega m_n, m_{n-1}> m_{n-1} - < H_\omega m_n, m_{n}> m_{n}.
$$

%(Please refer to Subsection \ref{ss-GSconv} for information concerning the stability of the Gram--Schmidt process as it is applied here.)

Now, the distance $D_{\omega,c}^{n+1}$ of the vector $\delta_{11}$ to the subspace $X_{n+1}$ equals the Euclidean norm
\begin{align}\label{e-D}
D_{\omega,c}^{n+1} = \|e_{n+1}\|_2 \text{ with }
e_{n+1} = \delta_{11} - P_{n+1} \delta_{11},
\end{align}
and where $P_{n+1}$ denotes the orthogonal projection from $l^2(\Z^2)$ onto $X_{n+1}$. 

A little more analysis allows us to simplify the latter expression. The following expression is closely related to the dimensionless scaling parameter that occurs in the so-called Thouless criterion.

\begin{prop}
We have $(D_{\omega,c}^{n+1})^2 = 1 - \sum_{l=2}^{n+1} (x_l)^2$ where $x_l$ denotes the $(1,1)-$entry of $m_l$.
\end{prop}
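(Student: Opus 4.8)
The plan is to expand the orthogonal projection $P_{n+1}$ in the orthonormal basis $\{m_0,m_1,\hdots,m_{n+1}\}$ of $X_{n+1}$ produced by the Gram--Schmidt recursion, apply the Pythagorean theorem to the decomposition $\delta_{11}=P_{n+1}\delta_{11}+e_{n+1}$, and then check that the first two coefficients of the expansion vanish, so that the summation may be started at $l=2$.

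Concretely, I would first write $P_{n+1}\delta_{11}=\sum_{l=0}^{n+1}\langle\delta_{11},m_l\rangle m_l$. Since $e_{n+1}$ is orthogonal to $X_{n+1}$ and $\|\delta_{11}\|_2=1$, Pythagoras (equivalently, Parseval's identity for the finite orthonormal system $\{m_l\}$ spanning $X_{n+1}$) gives
\[
1=\|e_{n+1}\|_2^2+\|P_{n+1}\delta_{11}\|_2^2=(D_{\omega,c}^{n+1})^2+\sum_{l=0}^{n+1}|\langle\delta_{11},m_l\rangle|^2 .
\]
All the vectors in play are real — the entries of $H_\omega=-\bigtriangleup+V_\omega$ and of $\delta_{00}$ are real, and the Gram--Schmidt step (an $\R$-linear combination followed by division by a real norm) preserves this — and $\langle\delta_{11},v\rangle$ is by definition the $(1,1)$-entry of $v\in l^2(\Z^2)$. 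Hence $\langle\delta_{11},m_l\rangle=x_l\in\R$ and $(D_{\omega,c}^{n+1})^2=1-\sum_{l=0}^{n+1}x_l^2$.

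It then remains to see that $x_0=x_1=0$. This is a locality (finite-speed-of-propagation) observation: $V_\omega$ is diagonal in the basis $\{\delta_i\}$, while $-\bigtriangleup$ couples only nearest neighbours, so applying $H_\omega$ enlarges the support of a vector by at most one lattice step. Consequently $H_\omega^k\delta_{00}$ is supported on $\{x\in\Z^2:|x_1|+|x_2|\le k\}$, and therefore so is every $m_l\in\spa\{H_\omega^k\delta_{00}:k\le l\}$. For $l\in\{0,1\}$ this support does not contain the point $(1,1)$, which is at $l^1$-distance $2$ from the origin, so the $(1,1)$-entry of $m_l$ is $0$; equivalently one checks directly that $m_0=\delta_{00}$ and $m_1$ (a scalar multiple of $H_\omega\delta_{00}-\langle H_\omega\delta_{00},\delta_{00}\rangle\delta_{00}$, which is supported on $(0,0)$ together with its four neighbours) both vanish at $(1,1)$. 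Combining with the previous paragraph yields $(D_{\omega,c}^{n+1})^2=1-\sum_{l=2}^{n+1}(x_l)^2$, as claimed.

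I do not anticipate a genuine obstacle: the Pythagorean identity is routine once the orthonormal basis from the Gram--Schmidt construction is in hand, and the only point requiring attention is the truncation of the summation index to $l\ge 2$, which is exactly the support estimate above. The one mild subtlety is not to conflate this spatial support bound with the tridiagonality established in the preceding proposition, whose proof rests on self-adjointness rather than on locality; here it is specifically the geometry of the lattice (the distance between $\delta_{00}$ and $\delta_{11}$) that is being used.
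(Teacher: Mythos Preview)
Your proof is correct and follows essentially the same route as the paper: expand $P_{n+1}\delta_{11}$ in the Gram--Schmidt orthonormal system $\{m_l\}$ and apply the Pythagorean theorem. The only difference is cosmetic---the paper reaches the identity $e_{n+1}=\delta_{11}-\sum_l\langle\delta_{11},m_l\rangle m_l$ by unwinding the recursion $e_{n+1}=e_n-\langle e_n,m_{n+1}\rangle m_{n+1}$ rather than invoking Parseval directly, and it leaves the vanishing of $x_0,x_1$ implicit, whereas you supply the finite-propagation argument explicitly; your version is in fact the cleaner of the two.
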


\begin{proof}
By the definition of the $m_n$'s, the vectors $e_{k}$ are recursively given by
$$
e_0 = \delta_{11}\text{ and } e_{n+1} = e_n - <e_n, m_{n+1}> m_{n+1}.
$$

We use this same recursive definition for $e_n$ in the inner product and the fact that the $m_n$ form an orthonormal sequence to obtain
$$
e_{n+1} = e_n - <(e_{n-1} - <e_{n-1}, m_{n}> m_{n}), m_{n+1}> m_{n+1} = e_n - <e_{n-1} , m_{n+1}> m_{n+1}.
$$
Repeated application of this argument yields that
$$
e_{n+1} = e_n - <e_{0} , m_{n+1}> m_{n+1} = e_n - <\delta_{11} , m_{n+1}> m_{n+1}.
$$

We proceed to replace $e_n$ by its recursive definition and so on, until we obtain
$$
e_{n+1}  = \delta_{11} - \sum_{l=2}^{n+1} <\delta_{11} , m_{l}> m_{l}.
$$

The proposition follows from equation \eqref{e-D} and the Pythagorean Theorem, since the $m_l$ form an orthonormal sequence and all of them are orthogonal to $e_{n+1}$. Also notice that $\|\delta_{11}\|_2=1$.
\end{proof}

%%%%%%%%%%%%%%%%%%%%%%%%%%%%%%%%%%%%%%%%%%%%%
\subsection{Choice of the re-scaling parameter}\label{ss-a}
%%%%%%%%%%%%%%%%%%%%%%%%%%%%%%%%%%%%%%%%%%%%%
For each fixed $c$ and $\omega$, the re-scaling exponent $a$ is chosen so that the re-scaled graph of the distance function (see Figure \ref{otherfigure}) satisfies the least square property; that is, the error when approximating the graph by a line is minimal. With this exponent we then find the corresponding linear approximation for the re-scaled distance function.

We include an extract of the table of best re-scaling exponents $a$ which satisfy the least square property for our data. As the values for $y_{\omega,c}$ were not very sensitively dependent on the precise value of $a$, we used a rather coarse mesh $a=0.05:0.05:0.85$ and refined using $a=0.01:0.01:0.05$, if the best re-scaling exponent was below $0.05$. Each entry in the table corresponds to a different realization of the random variable $\omega$.
\begin{align}\label{table}
%\begin{array}{|c||c|c|c|c|}
%\hline
%c&&&&\\ \hline
%.1&0.2&0.1&0.2&0.05\\ \hline
%.15&0.04&0.2&x&0.1\\ \hline
%.2&0.2&0.25&0.2&0.05\\ \hline
%.3&0.25&0.1&0.2&0.15\\ \hline
%.4&0.1&0.04&0.15&0.15\\ \hline
%.5&0.15&0.1&0.1&0.1\\ \hline
%.6&0.1&0.1&0.2&0.05\\ \hline
%.7&0.1&0.05&0.1&0.1\\ \hline
%.8&0.05&0.05&0.1&0.05\\ \hline
%.9&0.02&0.02&0.04&0.03\\ \hline
%1&0.05&x&0.03&??\\ \hline
%1.2&0.1&x&x&x\\ \hline
%1.3&x&0.05&x&x\\ \hline
%\end{array}
\begin{array}{|c||c|c|c|c|c|c|c|c|c|c|c|c|c|c|}
\hline
c&.1&.15&.2&.3&.4&.5&.6&.7&.8&.9&1&1.2&1.3\\ \hline
&0.2&0.04&0.2&0.25&0.1&0.15&0.1&0.1&0.05&0.02&0.05&0.1&N/A\\ \hline
&0.1&0.2&0.25&0.1&0.04&0.1&0.1&0.05&0.05&0.02&N/A&N/A&0.05\\ \hline
&0.2&N/A&0.2&0.2&0.15&0.1&0.2&0.1&0.1&0.04&0.03&N/A&N/A\\ \hline
&0.05&0.1&0.05&0.15&0.15&0.1&0.05&0.1&0.05&0.03&0.05&N/A&N/A\\ \hline
\end{array}
\end{align}

The $N/A$ indicates that for this particular realization, even the re-scaling parameter $a=0.02$ yields a concave graph. For this realization, we do not obtain any information. For values of $c\gtrsim 1.2$ many realizations did not yield a reasonable best fit parameter $a$. No statement can be made for such disorders.

Since we can only investigate finitely many randomizations, and one of the realizations for a fairly small value of $c=0.15$ yielded an inconclusive result, we decided to conjecture delocalization with non-zero probability in the Delocalization Conjecture \ref{t-mr}, rather than almost surely.

The existence of a positive re-scaling factor implies that the graph in Figure \ref{figure} will not decay to zero, e.g.~logarithmically. Indeed, if we use a re-scaling factor smaller than the one in the table will result in a `globally concave' graph for the distances $D_{\omega,c}^n$. In this case, the $y-$intercept of the line lies below the value expected for $D_{\omega,c}^\infty$.

%%%%%%%%%%%%%%%%%%%%%%%%%%%%%%%%%%%%%%%%%%%%%%%%%%
\section{Conclusions}\label{ss-results}
%%%%%%%%%%%%%%%%%%%%%%%%%%%%%%%%%%%%%%%%%%%%%%%%%%
As mentioned in Section \ref{s-setup}, for a fixed $c$ we chose many realizations $\omega$. We took the minimum of the resulting quantities for $L_{\omega,c}$ and $y_{\omega,c}$ (the $y-$intercept of the approximating line and the minimum $y-$intercept of the lines passing through any two consecutive points, respectively).

\begin{figure}
 \centerline{
 \includegraphics[width=4.7in]{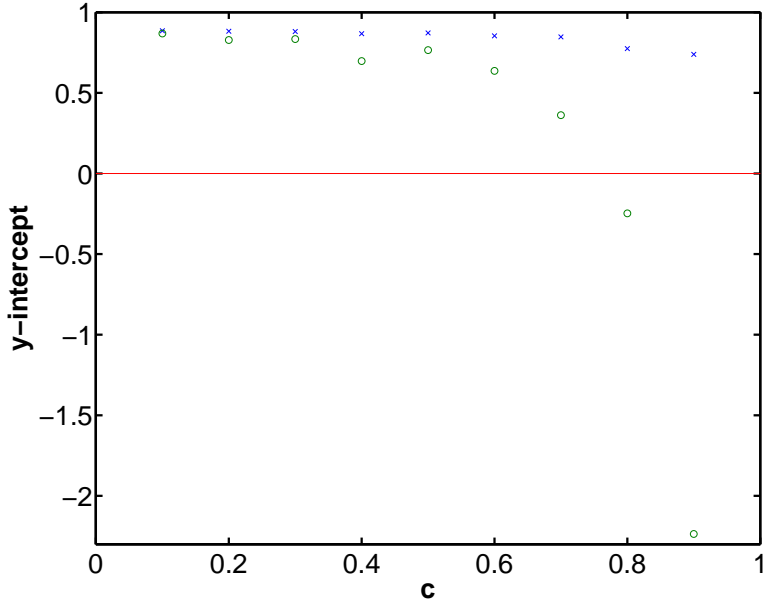}
 }
\caption{As a function of $c$ we show $y_{\omega,c}$ ($x$'s; larger values) and $L_{\omega,c}$ (circles; smaller values). Notice that $D_{\omega,c} \approx y_{\omega,c} \ge L_{\omega,c}>0$ for $c\lesssim0.7$, supporting our Delocalization Conjecture \ref{t-mr}.}\label{mainfigure}
\end{figure}

Figure \ref{mainfigure} shows $L_{\omega,c}$ and $y_{\omega,c}$ as a function of $c$. Being rather cautious, we say that a negative value for $y_{\omega,c}$ indicates that the orbit of $\delta_{00}$ may not span the whole space. Hence the final conclusion of this numerical experiment is precisely the Delocalization Conjecture \ref{t-mr}.

As it was explained in the remark following Corollary \ref{c-tool}, we cannot conclude localization even if $y_{\omega,c}<0$.
Therefore, the experiments do not imply localization for larger values of $c$.

%%%%%%%%%%%%%%%%%%%%%%%%%%%%%%%%%%%%%%%%%%%%%%%%%%
\section{Further supporting the credibility of the method and the numerical experiments}\label{s-supp}
%%%%%%%%%%%%%%%%%%%%%%%%%%%%%%%%%%%%%%%%%%%%%%%%%%
Apart from the usual tests (the program is running stably, checking all subroutines, many verifications for small $n$) , we have also tested the code and versions for other models: the free/unperturbed two dimensional Schr\"odinger operator and the one dimensional random Schr\"odinger operator. We briefly summarize the results, in order to provide verification for the correctness of method and code.

Further, we provide information of the energy distribution in terms of the distance from the origin of the evolution of the vector $\delta_{00}$, describing how a wave packet which was initially located at the origin changes as time progresses.

%%%%%%%%%%%%%%%%%%%%%%%%%%%%%%%%%%%%%%%%%%%%%%%%%%%
%\subsection{Two dimensional Schr\"odinger operator for large disorder}
%%%%%%%%%%%%%%%%%%%%%%%%%%%%%%%%%%%%%%%%%%%%%%%%%%%
%Figure \ref{mainfigure} shows that $L_{\omega,c}<0$ for disorder $c\gtrsim 0.7$, respectively. As Anderson localization was analytically proved for large disorder, this observation verifies the computations in this case.

%%%%%%%%%%%%%%%%%%%%%%%%%%%%%%%%%%%%%%%%%%%%%%%%%%
\subsection{Free discrete two dimensional Schr\"odinger operator}
%%%%%%%%%%%%%%%%%%%%%%%%%%%%%%%%%%%%%%%%%%%%%%%%%%
When we apply the free discrete Schr\"odinger operator $H = H_{\bf 0}$ to the vector $\delta_{00}$, it immediately becomes clear that $H \delta_{00}$ as well as all vectors $H^n\delta_{00}$, $n\in\N\cup\{0\}$, are symmetric with respect to the origin. In dimension $d=2$, it is not hard to see that the distance between $\delta_{11}$ and the orbit of $\delta_{00}$ under $H$ is at least $\sqrt{3}/{2}\approx 0.8660$. Indeed, we have
$$
\dist(\delta_{11}, \clos\spa\{H^n \delta_{00} :n\in\N\cup\{0\}\}) > 
\min_x\dist(u_x, \delta_{11}) = \sqrt{3}/{2},
$$
where
$$
u_x = x\delta_{-1-1}+x\delta_{-11}+x\delta_{1-1}+x\delta_{11} = 
\begin{pmatrix}
\ddots&\vdots&\vdots&\vdots&\vdots&\vdots&\Ddots\\
\hdots&0&0&0&0&0&\hdots\\
\hdots&0&x&0&x&0&\hdots\\
\hdots&0&0&0&0&0&\hdots\\
\hdots&0&x&0&x&0&\hdots\\
\hdots&0&0&0&0&0&\hdots\\
\Ddots&\vdots&\vdots&\vdots&\vdots&\vdots&\ddots
\end{pmatrix}\,.
$$

In the experiments for the free discrete two dimensional Schr\"odinger operator we obtained a $y-$intercept of the approximating line approximately equals $0.8867$. The re-scaled graph of distances still had a very convex shape, so the actual distance as $n\to\infty$ would be bigger. In fact, we have extracted from Figure \ref{freelap} an upper estimate of $0.8868$ by zooming in. Therefore, the distance must lie in the interval $[0.8866, 0.8868]$.

%\begin{figure}
% \centerline{
% \includegraphics[width=4.7in]{freelap.pdf}
% }
%\caption{Convex shape for the distance from $\delta_{11}$ to the orbit of $\delta_{00}$ under the free Schr\"odinger operator in two dimensions. The approximating line has $y-$intercept $y_{\omega,c} \approx 0.8867$.}\label{freelap}
%\end{figure}

\begin{figure}
 \centerline{
 \includegraphics[width=4.7in]{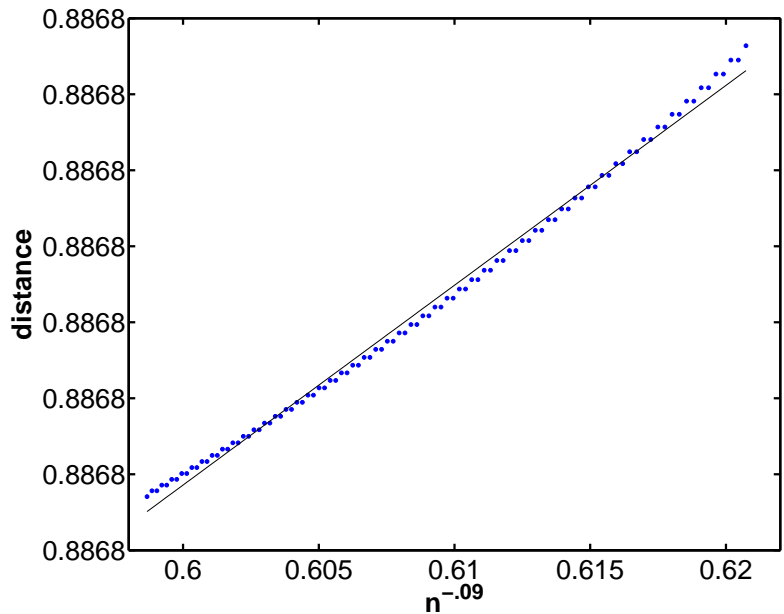}
 }
\caption{Convex shape for the distance from $\delta_{11}$ to the orbit of $\delta_{00}$ under the free Schr\"odinger operator in two dimensions. The approximating line has $y-$intercept $y_{\omega,c} \approx 0.8866$.}\label{freelap}
\end{figure}

%%%%%%%%%%%%%%%%%%%%%%%%%%%%%%%%%%%%%%%%%%%%%%%%%%
\subsection{Verifying localization for the one dimensional random Schr\"odinger operator}
%%%%%%%%%%%%%%%%%%%%%%%%%%%%%%%%%%%%%%%%%%%%%%%%%%
Consider the discrete random Schr\"odinger operator in one dimension, see e.g.~equations \eqref{Model} and \eqref{d-RandSchr} with $d=1$. For this operator, it is well known that localization occurs for random disorders of all strengths (in particular, for small values of $c$) and at all energies.

We have adopted and applied this computational approach for the discrete random Schr\"odinger operator in one dimension. Figure \ref{1Dfigure} shows a typical re-scaled graph of the distance
$$
D_{\omega,c}^{n} = \dist(\delta_{1}, \text{span}\{H_\omega^k \delta_{0}:k=0,1,2,\hdots,n\})
\text{ for } n=3000, 3001, 3002,  \hdots, 15000
$$
for the disorder $c = 0.05$.
With a re-scaling exponent of $a=0.09$, the graph of $D_{\omega,c}^{n}$ is still concave, so that the $y-$intercept of the approximating line is an upper estimate of $D_{\omega,c}$ the limit. Therefore we have
$$
D_{\omega,c}<y_{\omega,c} = -0.0543.
$$

While we know by the remark following Corollary \ref{c-tool} that this experiment does not allow us to conclude that there is localization, the result still provides support for the credibility of the method at hand as well as the numerical design.

\begin{figure}
 \centerline{
 \includegraphics[width=4.7in]{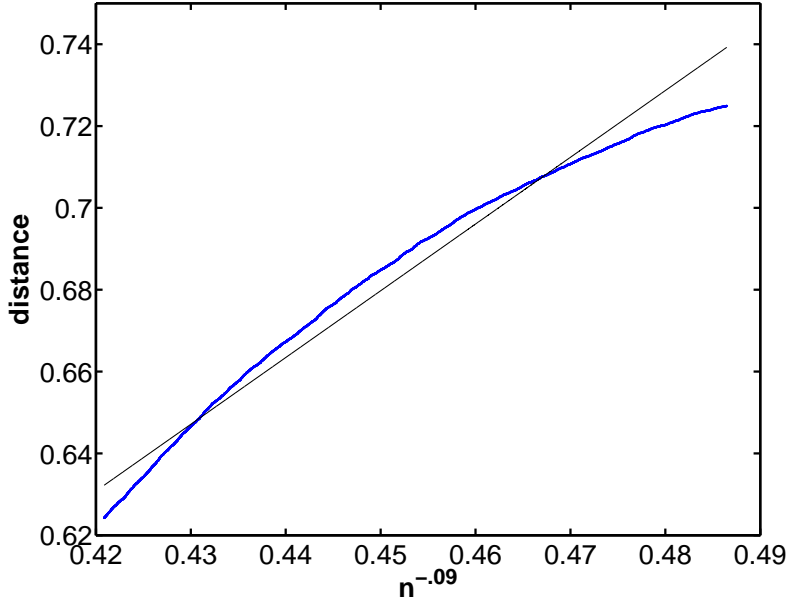}
 }
\caption{Discrete random Schr\"odinger operator in one dimension with disorder $c=0.05$. For $a=0.09$, we have $y_{\omega,c}\approx - 0.0543$ and a convex graph.}\label{1Dfigure}
\end{figure}

%%%%%%%%%%%%%%%%%%%%%%%%%%%%%%%%%%%%%%%%%%%%%
\subsection{Diffusion of energy for small values of $c$}\label{s-fresults}
%%%%%%%%%%%%%%%%%%%%%%%%%%%%%%%%%%%%%%%%%%%%%
We present the distribution of energies of a wave packet initially located at the origin as the random operator is repeatedly applied. By distribution of energies, we mean how much of the energy is located at which `distance' from the origin.

For example, in order to obtain how much energy of the vector $m_k$ (defined in Subsection \ref{ss-comp}) is at `distance' $2$ from the origin, we use the elements of $m_k$ which are located on the diamond for which the matrix
$$
\begin{pmatrix}%\label{e-diamond}
\ddots&\vdots&\vdots&\vdots&\vdots&\vdots&\Ddots\\
\hdots&4&3&2&3&4&\hdots\\
\hdots&3&2&1&2&3&\hdots\\
\hdots&2&1&0&1&2&\hdots\\
\hdots&3&2&1&2&3&\hdots\\
\hdots&4&3&2&3&4&\hdots\\
\Ddots&\vdots&\vdots&\vdots&\vdots&\vdots&\ddots
\end{pmatrix}
$$
has entries equal to $2$. The energy $E(2,k)$ of the vector $m_k$ at `distance' 2 from the origin is equal to the Euclidean norm over the elements in this diamond. In general, we have
\begin{align}\label{e-E}
E(l,k) = \sqrt{\sum_{|i|+|j|=l} (m_k)_{i,j}^2}
\end{align}
for the energy $E(l,k)$ of the vector $m_k$ at `distance' $l$ from the origin. Here $(m_k)_{i,j}$ refers to the $(i,j)-$entry of the $2\times 2-$ matrix $m_k$.

By small modifications of our programs, we have extracted the location of the energy the vector $\delta_{00}$ evolves under the random Hamiltonian for the values $c = .1,$ $c = 1$ and $c = 5$ of disorder, see Figure \ref{f-energy1}. In accordance with our Delocalization Conjecture \ref{t-mr}, the energy for small disorder is far away from the origin whereas it is concentrated close to the origin for large disorder.

\begin{figure}
 \centerline{
  \includegraphics[width=4.7in]{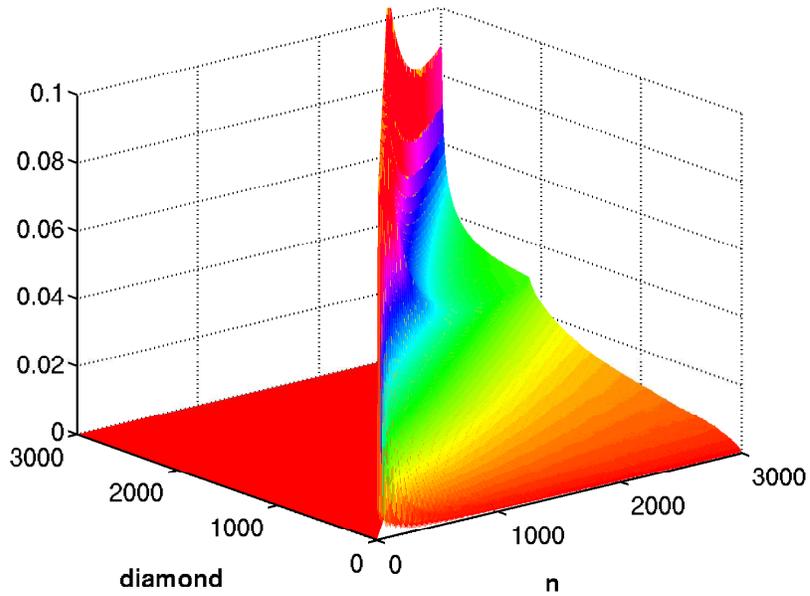}
 }
\caption{For $c=0.05$ we show $E(l,n)$, i.e.~the evolution of the energy distribution of $H_{\omega,c}^n \delta_{00}$ for the diamonds at distance $l$ from the origin. Notice that the energy travels far out from the origin (the diagonal is the farthest possible).}\label{f-energy2}
\end{figure}

\begin{figure}
 \centerline{
 \includegraphics[width=4.7in]{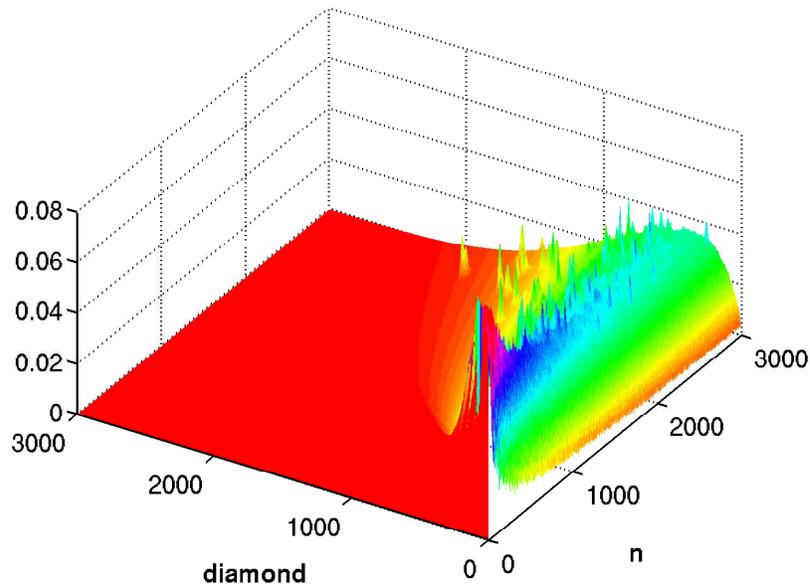}
}
\caption{The analog of Figure \ref{f-energy2} for $c=0.6$ (but with different color scale). The energy remains much closer to the origin.}\label{f-energy3}
\end{figure}

Figures \ref{f-energy2} and \ref{f-energy3} shows the energy distribution of $H_\omega^n\delta_{00}$ for $n=2999$ for values of $c$ ranging from $c=.1$ to $c=1$. Again, the fact that the energy for small disorder is far away from the origin whereas it shifts much closer to the origin as the disorder increases, supports Delocalization Conjecture \ref{t-mr}.

\begin{figure}
 \centerline{
 \includegraphics[width=4.9in]{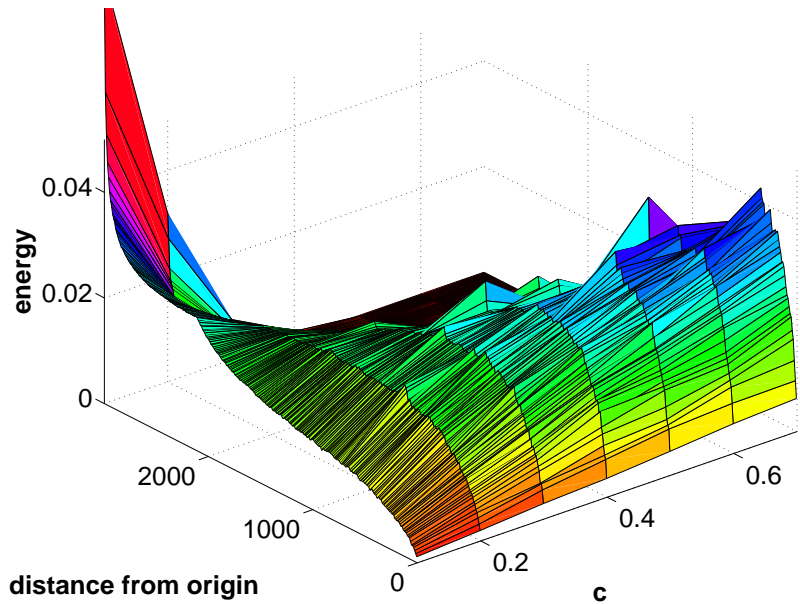}
 }
\caption{The figure shows $E(l,2999)$ (the Euclidean norm on the diamonds of $m_{2999}$ given by equation \eqref{e-E}) as a function of the `distance' $l$ from the origin for several values of disorder $c$.}\label{f-energy1}
\end{figure}

Both figures are the averages obtained from two realizations for each value of $c$. And again, many repetitions of these experiments for smaller values of $n$ were carried out, and the figures shown represent the behavior obtained in all repetitions.

%%%%%%%%%%%%%%%%%%%%%%%%%%%%%%%%%%%%%%%%%%%
\subsection{Precision}
%%%%%%%%%%%%%%%%%%%%%%%%%%%%%%%%%%%%%%%%%%%
The results are not a phenomenon of numerical errors (e.g.~round off errors that sum up over time). Indeed, we compared our results with those of a double precision computation. The results agreed very well.

%%%%%%%%%%%%%%%%%%%%%%%%%%%%%%%%%%%%%%%%%%%%%%
%\section{Relationship with the Thouless criterion for small disorder}\label{ss-Thouless}
%%%%%%%%%%%%%%%%%%%%%%%%%%%%%%%%%%%%%%%%%%%%%%
%

%%%%%%%%%%%%%%%%%%%%%%%%%%%%%%%%%%%%%%%%%%%%%%
%\section{Further remarks}
%%%%%%%%%%%%%%%%%%%%%%%%%%%%%%%%%%%%%%%%%%%%%%

%%%%%%%%%%%%%%%%%%%%%%%%%%%%%%%%%%%%%%%%%%
\section{On computing and memory requirements}\label{s-final}
%%%%%%%%%%%%%%%%%%%%%%%%%%%%%%%%%%%%%%%%%%
The implementation uses memory rather efficiently, so that the numerical experiments were mainly limited by the length of the computation. On the rather small machines available to us, it took 8 1/2 hours to complete one realization for one value of $c$. Since we need to include many realizations of the random variable and many values of $c$, it took even several units a considerable time to finish all the computations.

In order to compute $D_\omega^n$ described in Section \ref{s-setup}, our code requires order $n^2$ (i.e.~$\mathcal{O}(n^2)$) memory. Indeed, in order to carry out the Gram--Schmidt orthogonalization process described in Subsection \ref{ss-comp}, we must store matrices of size $\mathcal{O}(n)\times \mathcal{O}(n)$.
The corresponding code for the $d-$dimensional discrete random Schr\"odinger operator will require memory size of order $\mathcal{O}(n^{d})$.

The random Schr\"odinger operator on the (dyadic) tree uses memory of $\mathcal{O}(2^n)$. With the resources available to us, memory restrictions would only allow us to compute up to $n\approx 27$ for the tree. In this case, we cannot produce sufficient data to support the fact that the discrete random Schr\"odinger operator on the tree does indeed exhibit delocalization.

\providecommand{\bysame}{\leavevmode\hbox to3em{\hrulefill}\thinspace}
\providecommand{\MR}{\relax\ifhmode\unskip\space\fi MR }
\providecommand{\MRhref}[2]{%
  \href{http://www.ams.org/mathscinet-getitem?mr=#1}{#2}
}
\providecommand{\href}[2]{#2}


\begin{thebibliography}{10}
\bibitem[Abakumov--Liaw--Poltoratski{\u\i} 2012]{AbaLiawPolt}
E.~Abakumov, C.~Liaw, A.~Poltoratski{\u\i}, \emph{Cyclic vectors for rank-one perturbations and Anderson-type Hamiltonians}. Submitted, also see {\tt arXiv:1111.3095}.


 \bibitem[Anderson 1958]{And1958}
 P.~W.~Anderson, \emph{Absence of Diffusion in Certain Random Lattices}, Phys. Rev., \textbf{109}
   (1958), 1492--1505.

\bibitem[Aizenman--Molchanov 1993]{AizMol1993}
M.~Aizenman, S.~Molchanov, \emph{{Localization at large disorder and at extreme energies: An elementary derivation}}, Comm.~Math.~Phys.~\textbf{157} (1993), no.~2, 245--278.

\bibitem[Birman--Solomjak 1986]{birst}
M.~S.~Birman, M.~Z.~Solomjak, \emph{{Spectral theory of self-adjoint operators
  in Hilbert space}}, 1986.

\bibitem[Carmona--Lacroix 1990]{CL}
R.~Carmona, J.~Lacroix, \emph{{Spectral theory of random Schr\"odinger operators}}, Birkh\"auser, 1990.

\bibitem[Cycon--Froese--Kirsh--Simon 1987]{CFKS}
H.~Cycon, R.~Froese, W.~Kirsh, B.~Simon, \emph{{Topics in the Theory of Schr\"odinger Operators}}, Springer Verlag, 1987.

\bibitem[del Rio--Jitomirskaya--Last--Simon 1986]{SIMSULE}
R.~del Rio, S.~Jitomirskaya, Y.~Last, B.~Simon, \emph{Operators with
  singular continuous spectrum. {IV}. {H}ausdorff dimensions, rank-one
  perturbations, and localization}, J. Anal. Math. \textbf{69} (1996),
  153--200. \MR{1428099 (97m:47002)}

\bibitem[Figotin--Pastur 1991]{FP}
A.~Figotin, L.~Pastur, \emph{{Spectral properties of disordered systems in the one-body approximation}}, Springer Verlag, 1991.

\bibitem[Fr\"ohlich--Spencer 1983]{FS}
J.~Fr\"ohlich, T.~Spencer, \emph{{Absence of Diffusion in the tight binding model for large disorder of low energy}}, Commun.~Math.~Phys.~\textbf{88} (1983), 151--184.

\bibitem[Germinet--Klein--Schenker 2007]{Germ}
F.~Germinet, A.~Klein, J.~H.~Schenker, \emph{Dynamical
  delocalization in random {L}andau {H}amiltonians}, Ann. of Math. (2)
  \textbf{166} (2007), no.~1, 215--244. \MR{2342695 (2008k:82060)}

\bibitem[Ghribi--Hislop--Klopp 2007]{Klo2}
F.~Ghribi, P.~D.~Hislop, F.~Klopp, \emph{Localization for {S}chr\"odinger
  operators with random vector potentials},  \textbf{447} (2007), 123--138.
  \MR{2423576 (2009d:82067)}

\bibitem[Jak{\v{s}}i{\'c}--Last 2000]{JakLast2000}
V.~Jak{\v{s}}i{\'c}, Y.~Last, \emph{Spectral structure of {A}nderson
  type {H}amiltonians}, Invent. Math. \textbf{141} (2000), no.~3, 561--577.
  \MR{1779620 (2001g:47069)}

\bibitem[Jak{\v{s}}i{\'c}--Last 2006]{JakLast2006}
\bysame, \emph{Simplicity of singular spectrum in {A}nderson-type
  {H}amiltonians}, Duke Math. J. \textbf{133} (2006), no.~1, 185--204.
  \MR{2219273 (2007g:47062)}

\bibitem[Kato 1980]{katobook}
T.~Kato, \emph{Perturbation theory for linear operators}, Classics in
  Mathematics, Springer Verlag, Berlin, 1995, Reprint of the 1980 edition.
  \MR{1335452 (96a:47025)}

 \bibitem[Last 2007]{ExSpec}
 Y.~Last, \emph{Exotic Spectra: A Review of Barry Simon's Central Contributions}, Proceedings of Symposia in Pure Mathematics \textbf{76.2.} (2007) 697--712.

\bibitem[Simon 1994A]{Sim1994}
B.~Simon, \emph{Cyclic vectors in the {A}nderson model}, Rev. Math. Phys.
  \textbf{6} (1994), no.~5A, 1183--1185, Special issue dedicated to Elliott H.
  Lieb. \MR{1301372 (95i:82058)}

\bibitem[Simon 1994B]{SIMREV}
\bysame, \emph{{Spectral analysis of rank-one perturbations and applications}},
  Mathematical Quantum Theory I: Field Theory and Many-Body Theory (1994).

\bibitem[Simon--Wolff 1986]{SIMWOL}
B.~Simon and T.~Wolff, \emph{Singular continuous spectrum under rank-one
  perturbations and localization for random {H}amiltonians}, Comm.~Pure Appl.~Math., \textbf{39} (1986), no.~1, 75--90. \MR{820340 (87k:47032)}
\end{thebibliography}
\end{document}